\documentclass[journal,twoside,web]{ieeecolor}

\usepackage{generic}
\usepackage{cite}
\usepackage{amsmath,amssymb,amsfonts}
\usepackage{algorithmic}
\usepackage{graphicx}
\usepackage{textcomp}

\usepackage{graphicx}
\usepackage{subfigure}
\usepackage{epsfig} 
\usepackage{cite}
\usepackage{lcsys}

\usepackage{algorithm}
\usepackage{mathtools}
\usepackage{xcolor}
\usepackage{hyperref}
\hypersetup{hidelinks=true}

\newtheorem{definition}{Definition}
\newtheorem{theorem}{Theorem}
\newtheorem{lemma}{Lemma}

\newtheorem{remark}{Remark}

\newtheorem{example}{Example}

\begin{document}

\def\BibTeX{{\rm B\kern-.05em{\sc i\kern-.025em b}\kern-.08em
    T\kern-.1667em\lower.7ex\hbox{E}\kern-.125emX}}
\markboth{\journalname, VOL. XX, NO. XX, XXXX 2017}
{Author \MakeLowercase{\textit{et al.}}: Preparation of Papers for IEEE Control Systems Letters (August 2022)}

\title{Exact Model-Free Policy Iteration for Co-safe LTL Planning}

\author{
Zetong Xuan and Yu Wang, \IEEEmembership{Senior Member, IEEE}
\thanks{This work was supported in part by the NSF CAREER Award No. 2541603 and ARO under Grant No. W911NF-25-1-0243.}
\thanks{The authors are with the Department of Mechanical and Aerospace Engineering, University of Florida, Gainesville, FL 32611 USA (e-mail: z.xuan@ufl.edu; yuwang1@ufl.edu).}%
\thanks{This is the accepted version of a manuscript accepted for publication in the \emph{IEEE Control Systems Letters}. The version of record is available at \url{https://doi.org/10.1109/LCSYS.2026.3702192}.}
\thanks{\copyright~2026 IEEE. Personal use of this material is permitted. Permission from IEEE must be obtained for all other uses, in any current or future media, including reprinting/republishing this material for advertising or promotional purposes, creating new collective works, for resale or redistribution to servers or lists, or reuse of any copyrighted component of this work in other works.}
}

\maketitle
\thispagestyle{empty}

\begin{abstract}
This work studies model-free reinforcement learning for co-safe linear temporal logic (sc-LTL) objectives in finite Markov decision processes, which can be reduced to maximal reachability objectives via the standard product construction. For this problem, direct sample-based bootstrap methods (e.g., TD or Q-learning) may fail to converge to optimal policies due to the noncontractive nature and nonuniqueness of solutions to the Bellman equation. We develop a new two-step model-free reinforcement learning method that first uses a discounted surrogate to identify a clamp set that resolves this nonuniqueness, and then applies undiscounted policy evaluation and greedy policy improvement with guarantees of finding an optimal solution. We prove almost-sure convergence of the policy evaluation step and finite termination of the policy iteration algorithm at an optimal policy. These theoretical results are validated through numerical experiments on a stochastic grid world.
\end{abstract}

\begin{IEEEkeywords}
Markov decision processes, Reachability analysis, Reinforcement learning, Model checking.
\end{IEEEkeywords}

n\vspace{-10pt}
\section{Introduction}
\vspace{-5pt}
\IEEEPARstart{M}{odel}-free reinforcement learning (RL) for co-safe linear temporal logic (sc\mbox{-}LTL) objectives~\cite{kupferman2001model} is important for autonomous systems that accomplish complex rule-based tasks under unknown stochastic dynamics. Sc\mbox{-}LTL is the co-safe fragment of linear temporal logic (LTL)~\cite{baier2008principles}, whose satisfaction can be determined by a finite execution prefix, and is therefore naturally suited to tasks such as sequential completion, ordered visitation, and event-triggered termination~\cite{kressgazit2009temporallogicbased,smith2011optimal,lacerda2014optimal}. When the model of the underlying Markov decision process (MDP) is unavailable, one must rely on sampled interactions to synthesize policies. In this setting, model-free methods are especially attractive because they are typically more scalable than model-based methods that rely on explicit model construction~\cite{fu2014pac}, which faces a memory bottleneck at scale.

Existing model-free RL methods for general LTL or LTL$_f$ 
introduce an unnecessary layer of indirection when applied to 
sc\mbox{-}LTL, incurring a gap between the learned policy and 
the optimal policy for the satisfaction probability. The 
issue for general LTL is that it is designed to express 
$\omega$-regular behavior, whereas for sc\mbox{-}LTL, the 
desired objective already becomes a reachability probability 
after the standard product 
construction~\cite{baier2008principles,lacerda2014optimal,
ding2014optimal}. These methods optimize the expected return 
of a surrogate reward rather than the satisfaction probability 
itself. The most common approach uses discounted rewards so 
that the resulting expected return approximates the 
satisfaction probability~\cite{hasanbeig2018logicallyconstraineda,
hasanbeig2023certified,bozkurt2020control,bozkurt2021modelfreea,
shao2023sample,kantaros2024sample}, but the discounted return 
may differ substantially from the true satisfaction 
probability for practical discount 
factors~\cite{hahn2020faithful,voloshin2023eventual}. 
Setting $\gamma = 1$ to recover the exact satisfaction 
probability is not applicable, since the Bellman operator is 
no longer a contraction and these algorithms lose their 
convergence guarantees. Average-reward 
alternatives~\cite{le2024reinforcement} avoid this gap but 
enlarge the state space with reward-machine states. 
Although sc\mbox{-}LTL coincides with LTL$_f$ for 
reachability on finite MDPs~\cite{degiacomo2013ltlf}, to 
the best of our knowledge, no model-free LTL$_f$ work 
directly computes an optimal policy for the satisfaction 
probability. Existing methods rely on analogous discounted 
weighted~\cite{degiacomo2019restraining} or 
indicator-style~\cite{toroicarte2018lpopl} rewards, incurring 
the same gap.

A direct model-free treatment of the reachability problem induced by sc\mbox{-}LTL is nevertheless nontrivial. The Bellman equation for reachability probability may admit multiple solutions in the absence of discounting~\cite{brazdil2014verification}. 
When recurrent structure is present, other fixed-points may arise. 
Consequently, sample-based methods driven by bootstrapping, such as temporal-difference (TD) updates, may not converge to the true reachability probability.
Existing sample-based methods for reachability also have important limitations. Some require additional assumptions or side information, such as knowledge of the topology of the underlying graph~\cite{ashok2019paca}, expected conditional distance information~\cite{svoboda2024reinforcement}, or model knowledge for identifying maximal end components in tree search~\cite{ashok2018monte}.

This work develops an exact model-free RL method for sc\mbox{-}LTL objectives and establishes convergence guarantees that do not rely on surrogate approximation or graph search.
The key idea is to identify and remove the recurrent structure responsible for the nonuniqueness of solutions, namely the clamp set.
We first use a discounted formulation whose value function reveals whether the target can ever be reached.
A positive value indicates that the target can be reached with positive probability, whereas a zero value identifies recurrent structure away from the target.
After eliminating this source of nonuniqueness, the Bellman equation becomes well\mbox{-}posed on the remaining states.
The true reachability probabilities can then be computed and used for greedy policy improvement.
Unlike \cite{brazdil2014verification}, which identifies end components by thresholding the repeated occurrence of state-action pairs along sampled paths, our method identifies the relevant recurrent structure through value function estimation, thereby utilizing bootstrapping and avoiding reliance on prescribed sampled-path lengths.

Our work is different from existing sample-based methods for sc\mbox{-}LTL that consider modified formulations rather than the exact reachability problem induced by sc\mbox{-}LTL.
Specifically, \cite{li2019topological} solves a discounted surrogate problem via topology-guided approximate dynamic programming, while \cite{li2017samplingbased} performs search over a parameterized policy class. Relatedly, \cite{cohen2023temporal} studies sc\mbox{-}LTL control for hybrid systems via a reach-avoid decomposition and approximately solves the resulting subproblems in a model-based RL method.
\vspace{-5pt}
\section{Preliminaries}\label{sec:prelim}
\vspace{-5pt}
This section introduces how an sc\mbox{-}LTL planning problem can be reduced to a maximal reachability problem.
By the standard automata-based product construction, an sc\mbox{-}LTL objective can be translated into a reachability objective
on a finite product MDP~\cite{baier2008principles,lacerda2014optimal}.
To streamline the presentation, we work directly with finite MDPs with reachability objectives and recall the Bellman equations for reachability.

\begin{definition}
\label{def:mdp}
A finite Markov decision process with reachability objective is a tuple
$\mathcal{M} = (S, A, T, \beta, Z)$ where
(i) $S$ is a finite set of states and $\beta\in \Delta(S)$ is the initial-state distribution, where $\Delta(S)$ denotes the set of probability distributions on $S$;
(ii) $A$ is a finite set of actions, and $A(s)$ denotes the set of admissible actions at $s\in S$;
(iii) \(T(\cdot|s,a)\) is the transition distribution for each admissible state-action pair \((s,a)\); 
(iv) $Z\subseteq S$ is the target set to be reached.
\end{definition}

A path of $\mathcal{M}$ is an infinite state sequence $\sigma=s_0s_1s_2\cdots$ such that for all $t\ge 0$,
there exists $a_t\in A(s_t)$ with $T(s_{t+1}|s_t,a_t)>0$.
We say that $\sigma$ satisfies the reachability objective if there exists $t\in\mathbb N$ such that $s_t\in Z$.

\begin{definition}
\label{def:policy}
A deterministic stationary policy is a function $\pi:S\to A$ such that $\pi(s)\in A(s)$ for all $s\in S$.
Given $\pi$, the induced Markov chain has transition kernel $T^\pi(s'|s):=T(s'|s,\pi(s))$ for all $s,s'\in S$.
\end{definition}

Let \(\mathbb P_s^\pi\) and \(\mathbb E_s^\pi\) denote the probability measure and expectation, respectively, under policy \(\pi\) with initial state \(s_0=s\).

The reachability formulation studied here captures a broad class of sc\mbox{-}LTL planning problems. An sc\mbox{-}LTL objective can be translated into a reachability objective by the standard product construction, which builds a product MDP from the original MDP and a deterministic finite automaton (DFA) generated from the sc\mbox{-}LTL formula~\cite{baier2008principles,lacerda2014optimal}. The automaton state encodes the specification progress and is included in the product state. Thus, a stationary policy on the product MDP corresponds to a finite-memory policy on the original MDP, and it suffices to search for an optimal deterministic stationary policy on the product MDP~\cite{baier2008principles}. On the product MDP, reaching the target set is equivalent to satisfying the sc\mbox{-}LTL objective.

For any subset $C\subseteq S$, define the hitting time $\tau_C := \inf\{t\ge 0 : s_t\in C\}$, with the convention $\inf\emptyset=\infty$.
For a policy $\pi$, the value function $V^\pi:S\to[0,1]$ is defined by
\begin{align}
V^\pi(s) := \mathbb P_s^\pi(\tau_Z<\infty).
\label{eq:Vpi_def}
\end{align}
It satisfies the Bellman equation
\begin{align}
V^\pi(s)
=
\begin{cases}
1, & s\in Z,\\
\sum_{s'\in S} T^\pi(s'|s)\,V^\pi(s'), & s\notin Z.
\end{cases}
\label{eq:BE_pi}
\end{align}
The associated state-action value function is defined as $Q^\pi(s,a):=\sum_{s'\in S} T(s'|s,a)\,V^\pi(s')$ for $s\in S$, $a\in A(s)$.
Equivalently, for $s\notin Z$, one has $V^\pi(s)=Q^\pi(s,\pi(s))$.

The maximal reachability problem seeks an optimal policy $\pi^\star$ and the corresponding optimal value function, 
\begin{align}
\pi^\star \in \arg\max\nolimits_\pi \mathbb E_{s_0\sim\beta}[V^\pi(s_0)], \quad V^\star(s) := \max\nolimits_\pi V^\pi(s). \notag
\end{align}
Moreover, $V^\star$ satisfies the Bellman optimality equation
\begin{align}
V^\star(s)
=
\begin{cases}
1, & s\in Z,\\[2pt]
\max\limits_{a\in A(s)} \sum\limits_{s'\in S} T(s'|s,a)\,V^\star(s'), & s\notin Z.
\end{cases}
\label{eq:BOE}
\end{align}

It is well known that \eqref{eq:BE_pi} and \eqref{eq:BOE} may admit multiple fixed-point solutions.
Meanwhile $V^\pi$ and $V^\star$ correspond to the \emph{least fixed-point} solutions of
\eqref{eq:BE_pi} and \eqref{eq:BOE}, respectively~\cite{baier2008principles}.
\vspace{-5pt}
\section{Motivation}\label{sec:motivation}
\vspace{-5pt}
In this section, we explain why maximal reachability is difficult to solve from samples. In the model-free setting, one naturally relies on bootstrap updates such as TD and Q-learning. The difficulty is that, in the undiscounted case, the Bellman equations may admit multiple solutions, whereas reachability probabilities are given by the least fixed-point solutions. 
Consequently, bootstrap methods may converge to a solution that does not equal the reachability probability.

\begin{example}
\label{ex:minimal_drift}
Consider the following MDP.
At state $s$, action $u_1$ keeps the process at $s$, whereas action $u_2$ reaches a target state $z\in Z$ with probability $1/2$ and a non-target terminal state $f\in S\setminus Z$ with probability $1/2$.
Thus $V^\star(s)=1/2$.
The Bellman optimality equation gives
\[
Q^\star(s,u_2)=\tfrac{1}{2},\,
Q^\star(s,u_1)=\max\{Q^\star(s,u_1),Q^\star(s,u_2)\}.
\]
Hence the Bellman optimality equation admits any assignment of the form \(Q^\star(s,u_2)=1/2\) and \(Q^\star(s,u_1)\ge 1/2\) as a solution. The desired reachability values are given by its least solution.
Therefore, a sample-based bootstrap method may not converge to a solution equal to the reachability probability and incorrectly prefer the self\mbox{-}loop action $u_1$, which yields zero reachability.
\end{example}

This phenomenon is not specific to that example.
It arises from recurrent structure in the Markov chain.

\begin{definition}\label{def:BSCC}
A bottom strongly connected component (BSCC) of a Markov chain is a strongly connected component without outgoing transitions.
\end{definition}

On such BSCCs, sample\mbox{-}based bootstrap updates may settle at an arbitrary fixed-point of the Bellman equation.
Under a policy $\pi$, any BSCC $C$ disjoint from $Z$ satisfies $V^\pi(s)=0$ for all $s\in C$, since once the path enters $C$, it cannot reach $Z$.
However, the Bellman equation does not uniquely determine these zero values.
Restricted to $C$, 
\begin{align}
V^\pi|_C=T_C^\pi\, V^\pi|_C,
\label{eq:motivation_bscc}
\end{align}
where $V^\pi|_C\in\mathbb R^{|C|}$ is the restriction of $V^\pi$ to $C$, and $T_C^\pi\in\mathbb R^{|C|\times |C|}$ is the corresponding submatrix of \(T^\pi\).
Since $C$ is a BSCC, every constant vector on $C$ satisfies \eqref{eq:motivation_bscc}.
In Example~\ref{ex:minimal_drift}, under action $u_1$, $\{s\}$ is such a BSCC.
\section{Methodology}
\label{sec:method}
In this section, we develop a model-free policy iteration algorithm for maximal reachability. For each policy, we identify from samples the states responsible for BSCC\mbox{-}induced nonuniqueness, perform undiscounted policy evaluation on the remaining states, and then apply greedy policy improvement. Formal guarantees are deferred to Sec.~\ref{sec:convergence}.

\begin{algorithm}[b]
\footnotesize
\caption{Model-Free Policy Iteration for Reachability}
\label{alg:pdF_pi}
\begin{algorithmic}[1]
\REQUIRE Simulator access to \(s'\sim T(\cdot|s,a)\); target set \(Z\); discount factor \(\gamma\in(0,1)\); clamp threshold \(\varepsilon\); improvement threshold \(\delta\)
\STATE Initialize a deterministic stationary policy \(\pi_0\)
\FOR{\(k=0,1,2,\ldots\)}
    \STATE Estimate \(\widehat V_\gamma^{\pi_k}\) from samples under \(\pi_k\) using \eqref{eq:td_disc_eval_method}
    \STATE Update \(\widehat F^{\pi_k}\) by \eqref{eq:Fhatpi_method}
    \STATE Estimate \(\widehat V_{\widehat F^{\pi_k}}^{\pi_k}\) from samples under \(\pi_k\) using \eqref{eq:td_undisc_eval_method}
    \FORALL{\(s\in S\setminus Z\), \(a\in A(s)\)}
        \STATE Estimate \(\widehat Q_k(s,a)\) from samples by \eqref{eq:Qhat_method}
    \ENDFOR
    \FORALL{\(s\in S\setminus Z\)}
        \STATE Update \(\pi_{k+1}(s)\) according to \eqref{eq:strict_tie_stay_method}
    \ENDFOR
    \STATE \textbf{if} \(\pi_{k+1}=\pi_k\) \textbf{then terminate}
\ENDFOR
\RETURN \(\pi_k\)
\end{algorithmic}
\end{algorithm}

\vspace{-10pt}
\subsection{Model-Free Policy Evaluation for Reachability}
\label{subsec:pe}

Fix a deterministic stationary policy \(\pi\). Our goal is to remove the source of nonuniqueness in the undiscounted Bellman equation before performing policy evaluation. To identify the states responsible for the BSCC-induced nonuniqueness from samples, we introduce the discounted surrogate
\begin{align}
V_\gamma^\pi(s) := \mathbb E_s^\pi[\gamma^{\tau_Z}\mathbf 1\{\tau_Z<\infty\}], \quad \gamma\in(0,1),
\label{eq:Vgpi_method}
\end{align}
where \(\gamma\) is a discount factor.
To isolate this source of nonuniqueness, define the clamp set
\begin{align}
F^\pi
:=
\{s\in S:V_\gamma^\pi(s)=0\}
=
\{s\in S:V^\pi(s)=0\}.
\label{eq:Fpi_method}
\end{align}
Thus \(F^\pi\) contains all BSCCs disjoint from \(Z\) under \(\pi\), together with transient states with \(V^\pi=0\).
Clamping these transient states does not affect the value function on \(S\setminus F^\pi\). 

Here, \(V_\gamma^\pi\) can be estimated from samples. For \(s_t\notin Z\),
\begin{align}
\widehat V_\gamma^\pi(s_t)
\leftarrow
(1-\alpha_t)\widehat V_\gamma^\pi(s_t)
+\alpha_t\,\gamma\,\widehat V_\gamma^\pi(s_{t+1}),
\label{eq:td_disc_eval_method}
\end{align}
where \(\alpha_t\) is a stepsize, \(s_{t+1}\sim T(\cdot\mid s_t,\pi(s_t))\), and \(\widehat V_\gamma^\pi(s)=1\) for \(s\in Z\). 
The estimated clamp set is
\begin{align}
\widehat F^\pi
:=
\{s\in S:\widehat V_\gamma^\pi(s)\le \varepsilon\},
\label{eq:Fhatpi_method}
\end{align} with threshold \(\varepsilon>0\). 
Define the exact and estimated evaluation domains by
$S_0^\pi := S\setminus (Z\cup F^\pi)$,
$\widehat S_0^\pi := S\setminus (Z\cup \widehat F^\pi)$.
We then perform undiscounted policy evaluation on the estimated domain \(\widehat S_0^\pi\). In implementation, \(\widehat F^\pi\) need not be constructed explicitly. 
For \(s\in\widehat S_0^\pi\), consider the clamped undiscounted Bellman equation, which is now well\mbox{-}posed. 
\begin{align}
V_{\widehat F^\pi}^\pi(s)
=
\sum_{s'\in\widehat S_0^\pi} T^\pi(s'|s)\,V_{\widehat F^\pi}^\pi(s')
+
\sum_{z\in Z} T^\pi(z|s).
\label{eq:clamped_bellman_method}
\end{align}
Generally, for any policy \(\pi\) and any clamp set \(F\subseteq S\setminus Z\), \(V_F^\pi\) is the value function under \(\pi\) with boundary value \(0\) on \(F\) and \(1\) on \(Z\). The policy and clamp set indices vary independently.
The corresponding sample recursion is
\begin{align}
\widehat V_{\widehat F^\pi}^\pi(s_t)
\leftarrow
(1-\alpha_t)\widehat V_{\widehat F^\pi}^\pi(s_t)
+\alpha_t\,\widehat V_{\widehat F^\pi}^\pi(s_{t+1}),
\label{eq:td_undisc_eval_method}
\end{align}
for \(s_t\in\widehat S_0^\pi\), with boundary conditions \(\widehat V_{\widehat F^\pi}^\pi(s)=1\) if \(s\in Z\) and \(\widehat V_{\widehat F^\pi}^\pi(s)=0\) if \(s\in \widehat F^\pi\).
When \(\widehat F^\pi=F^\pi\), equivalently \(\widehat S_0^\pi=S_0^\pi\), \(V_{\widehat F^\pi}^\pi\) agrees with the exact reachability probability \(V^\pi\), as shown in Sec.~\ref{sec:convergence}.

\vspace{-10pt}
\subsection{Model-Free Policy Iteration for Reachability}
\label{subsec:pi}

We now introduce a model-free policy iteration algorithm for maximal reachability.
Given policy \(\pi_k\), for each \(s\in S\setminus Z\) and \(a\in A(s)\), we draw \(N_Q\) independent samples and estimate the one-step state-action value by
\begin{align}
\widehat Q_k(s,a)
:=
\frac{1}{N_Q}\sum\nolimits_{i=1}^{N_Q}
\widehat V_{\widehat F^{\pi_k}}^{\pi_k}(s_i').
\label{eq:Qhat_method}
\end{align}

For each \(s\in S\setminus Z\), let \(a_k^\star(s)\in \arg\max_{a\in A(s)} \widehat Q_k(s,a)\). The next policy is updated by
\begin{align}
\pi_{k+1}(s)&=a_k^\star(s)
&&\text{if } \widehat Q_k\!\bigl(s,a_k^\star(s)\bigr)>
\widehat Q_k\!\bigl(s,\pi_k(s)\bigr)+\delta,\nonumber\\
\pi_{k+1}(s)&=\pi_k(s)
&&\text{otherwise},
\label{eq:strict_tie_stay_method}
\end{align}
where \(\delta\ge 0\) is an improvement threshold.
Keeping the current action unless there exists a strict improvement is essential for the policy improvement result in Lemma~\ref{lem:monotone_improvement_exact}.

\section{Convergence Analysis}
\label{sec:convergence}

This section analyzes the convergence of Algorithm~\ref{alg:pdF_pi}. For each fixed policy, policy evaluation recovers the exact reachability value function, and then establishes a convergence guarantee for policy iteration under the following conditions.

\begin{itemize}
    \item \(\textup{(PE1)}\) the discounted TD recursion \eqref{eq:td_disc_eval_method} converges almost surely to \(V_\gamma^\pi\);
    \item \(\textup{(PE2)}\) when exact recovery of the clamp set is claimed, the threshold \(\varepsilon\) satisfies
\(\varepsilon < \min \{\, V_\gamma^\pi(s): V_\gamma^\pi(s)>0 \,\}\);
\item \(\textup{(PE3)}\) for the clamped TD recursion \eqref{eq:td_undisc_eval_method}, the Robbins--Monro stepsize conditions hold and every state in \(\widehat S_0^\pi\) is visited infinitely often.
\end{itemize}
\begin{remark}
\label{rem:eval_assumptions}
Conditions (PE1) and (PE3) are standard for asynchronous tabular TD with Robbins--Monro stepsizes and infinite state visitation~\cite{tsitsiklis1994asynchronous,bertsekas1996neuro}, easily enforced under simulator access. Within (PE2), the choice of \(\varepsilon\) and \(\gamma\) controls the accuracy of clamp set recovery rather than the uniqueness of the clamped Bellman equation~\eqref{eq:clamped_bellman_method}, which only requires \(F^\pi \subseteq \widehat F^\pi\). (PE2) further ensures \(\widehat F^\pi = F^\pi\) for exact recovery. A larger \(\gamma\) widens the valid \(\varepsilon\) region at the cost of slower discounted TD convergence. A quantitative sensitivity analysis is left to future work.
\end{remark}

\vspace{-5pt}
\subsection{Convergence of Sample-Based Policy Evaluation}
\label{subsec:sample_eval}

For policy evaluation,
we first show that the discounted evaluation stage eventually identifies the correct clamp set, and then show that the corresponding clamped evaluation stage converges to the true reachability value \(V^\pi\).

The next lemma shows that 
the estimated clamp set eventually contains and converges exactly to the true clamp set under a mild separation condition.

\begin{lemma}
\label{lem:partition_stabilization}
Fix a policy \(\pi\), and let \(\widehat F_m^\pi:=\{s\in S:\widehat V_{\gamma,m}^\pi(s)\le \varepsilon\}\) denote the estimated clamp set after \(m\) discounted TD updates~\eqref{eq:td_disc_eval_method}.
Under \(\textup{(PE1)}\), \(\widehat F_m^\pi \supseteq F^\pi\) eventually almost surely, where \(F^\pi\) is defined in \eqref{eq:Fpi_method}.
Moreover, under \(\textup{(PE1)}\)--\(\textup{(PE2)}\), \(\widehat F_m^\pi = F^\pi\) eventually almost surely.
\end{lemma}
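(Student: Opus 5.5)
The argument works on a single full-probability event and uses finiteness of \(S\). By (PE1) there is an event \(\Omega_0\) with \(\mathbb P(\Omega_0)=1\) on which \(\widehat V_{\gamma,m}^\pi(s)\to V_\gamma^\pi(s)\) for every \(s\in S\). Since \(S\) is finite, this is the intersection of finitely many almost-sure events, so it is again almost sure. Fix an outcome in \(\Omega_0\). Every step below is deterministic convergence of a finite collection of real sequences, so it suffices to produce, for each state, an index beyond which that state is classified correctly, and then take the maximum over the finitely many states.

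\emph{Inclusion \(\widehat F_m^\pi\supseteq F^\pi\).} Let \(s\in F^\pi\), so \(V_\gamma^\pi(s)=0\) by \eqref{eq:Fpi_method}. Convergence gives an index \(m_s\) with \(|\widehat V_{\gamma,m}^\pi(s)|<\varepsilon\) for all \(m\ge m_s\). Because \(\varepsilon>0\), this yields \(\widehat V_{\gamma,m}^\pi(s)\le\varepsilon\), that is, \(s\in\widehat F_m^\pi\). Setting \(M_1:=\max_{s\in F^\pi}m_s\), which is finite since \(F^\pi\) is finite, we get \(F^\pi\subseteq\widehat F_m^\pi\) for all \(m\ge M_1\). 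This part uses only (PE1) and the positivity of \(\varepsilon\).

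\emph{Reverse inclusion under (PE2).} Define \(\eta:=\min\{V_\gamma^\pi(s):V_\gamma^\pi(s)>0\}\). This minimum is over a finite set and is therefore attained. If the set is empty, then \(F^\pi=S\) and the reverse inclusion is trivial. Otherwise (PE2) gives \(\eta>\varepsilon\). Take \(s\notin F^\pi\). By \eqref{eq:Fpi_method}, \(V_\gamma^\pi(s)>0\), hence \(V_\gamma^\pi(s)\ge\eta\). Convergence gives \(m_s'\) such that \(|\widehat V_{\gamma,m}^\pi(s)-V_\gamma^\pi(s)|<\eta-\varepsilon\) for all \(m\ge m_s'\), and therefore \(\widehat V_{\gamma,m}^\pi(s)>\varepsilon\), so \(s\notin\widehat F_m^\pi\). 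With \(M_2:=\max_{s\notin F^\pi}m_s'\), we have \(\widehat F_m^\pi=F^\pi\) for all \(m\ge\max\{M_1,M_2\}\). This holds on \(\Omega_0\), hence almost surely.

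There is no real obstacle here. The step that needs care is getting a strict positive gap in the second part. That requires using the identity \(F^\pi=\{V_\gamma^\pi=0\}=\{V^\pi=0\}\) from \eqref{eq:Fpi_method}, noting that states in \(Z\) have value \(1\) and so lie outside \(F^\pi\), and using finiteness so that \(\eta\) is a strictly positive attained minimum rather than an infimum. One further remark applies to both parts: the threshold rule \(\le\varepsilon\) together with \(\varepsilon>0\) makes the first part robust even if the estimate approaches \(0\) from above or below.
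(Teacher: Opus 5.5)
Your proof is correct and is essentially the paper's argument. The paper bounds the sup-norm error \(\|\widehat V_{\gamma,m}^\pi - V_\gamma^\pi\|_\infty\): eventually at most \(\varepsilon\) for the inclusion, and below \(\min\{V_\gamma^\pi(s):V_\gamma^\pi(s)>0\}-\varepsilon\) for equality, which on a finite state space is the same as your per-state indices combined by a maximum (your explicit handling of the empty-minimum case and the attained positive gap are small points the paper leaves implicit).
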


\begin{proof}
If \(\|\widehat V_{\gamma,m}^\pi - V_\gamma^\pi\|_\infty \le \varepsilon\), then for every \(s\in F^\pi\), since \(V_\gamma^\pi(s)=0\), we have \(\widehat V_{\gamma,m}^\pi(s) \le |\widehat V_{\gamma,m}^\pi(s)-V_\gamma^\pi(s)| \le \varepsilon\), so \(s\in \widehat F_m^\pi\). Hence \(F^\pi\subseteq \widehat F_m^\pi\), and the eventual inclusion follows from the almost-sure convergence of discounted TD.

If in addition \(\min \{\, V_\gamma^\pi(s): V_\gamma^\pi(s)>0 \,\}>\varepsilon\), then almost\mbox{-}sure convergence implies that there exists \(m'\ge m\) such that \(\|\widehat V_{\gamma,m'}^\pi - V_\gamma^\pi\|_\infty < \min \{\, V_\gamma^\pi(s): V_\gamma^\pi(s)>0 \,\}-\varepsilon\).
Therefore, for every \(s\notin F^\pi\),
\begin{align}
\widehat V_{\gamma,m'}^\pi(s)
\ge
V_\gamma^\pi(s)-\|\widehat V_{\gamma,m'}^\pi - V_\gamma^\pi\|_\infty
>
\varepsilon,
\label{eq:notin_Fhat_conv}
\end{align}
so \(s\notin \widehat F_{m'}^\pi\) eventually. Combining this with \(F^\pi\subseteq \widehat F_{m'}^\pi\) yields \(\widehat F_{m'}^\pi = F^\pi\) eventually almost surely.
\end{proof}

Once the estimated clamp set \(\widehat F^\pi\) contains \(F^\pi\), the clamped Bellman equation is uniquely solvable. Furthermore, when \(\widehat F^\pi=F^\pi\), its solution equals \(V^\pi\). 
The following lemma establishes the convergence of policy evaluation.

\begin{lemma}
\label{lem:clamped_td_conv}
Fix a policy \(\pi\).
If \(F^\pi\subseteq \widehat F^\pi\), then the clamped Bellman equation \eqref{eq:clamped_bellman_method} on \(\widehat S_0^\pi\) admits a unique solution \(V_{\widehat F^\pi}^\pi\), and under \textnormal{(PE3)} the recursion \eqref{eq:td_undisc_eval_method} converges almost surely to \(V_{\widehat F^\pi}^\pi\).
Moreover, if \(\widehat F^\pi=F^\pi\), then \(V_{F^\pi}^\pi(s)=V^\pi(s)\) for all \(s\in S_0^\pi\).
\end{lemma}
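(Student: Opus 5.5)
Write $P:=T^\pi_{\widehat S_0^\pi}$ for the substochastic submatrix of $T^\pi$ on $\widehat S_0^\pi$, and $b(s):=\sum_{z\in Z}T^\pi(z|s)$. Then \eqref{eq:clamped_bellman_method} reads $V=PV+b$ on $\widehat S_0^\pi$. The plan is to show that $P$ is transient, i.e.\ $P^n\to 0$, so that $\rho(P)<1$. Then $I-P$ is invertible and $V_{\widehat F^\pi}^\pi=(I-P)^{-1}b$ is the unique solution.

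\textbf{Step 1 (transience).} Take any $s\in\widehat S_0^\pi$. Since $F^\pi\subseteq\widehat F^\pi$, we have $s\notin F^\pi$, so $V^\pi(s)>0$ by \eqref{eq:Fpi_method}. Hence there is a finite path under $\pi$ from $s$ to $Z$ with positive probability. Its first exit from $\widehat S_0^\pi$ lands in $Z\cup\widehat F^\pi$, so from every $s$ the chain leaves $\widehat S_0^\pi$ within $n_s\le|S|$ steps with positive probability. Set $n=|S|$ and $p=\min_s(\text{exit prob. within } n)>0$. The row sums of $P^{n}$ are then at most $1-p$, giving $\|P^{n}\|_\infty<1$. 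Thus $\rho(P)<1$, which gives uniqueness. Equivalently, the map $V\mapsto PV+b$ is an $n$-stage contraction in $\|\cdot\|_\infty$, or a contraction in a weighted max-norm. This is the step I expect to need the most care: the key point is that only $F^\pi\subseteq\widehat F^\pi$ is used, and extra clamped states only help exit.

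\textbf{Step 2 (TD convergence).} Recursion \eqref{eq:td_undisc_eval_method} is asynchronous stochastic approximation for the operator $H V:=PV+b$. The noise term $V(s_{t+1})-\mathbb E[V(s_{t+1})|s_t]$ is a martingale difference whose conditional variance is bounded by $C(1+\|V\|^2)$. By Step 1 and the standard result that a transient substochastic matrix admits a positive vector $\xi$ with $P\xi\le\beta\xi$, $\beta<1$, the operator $H$ is a weighted max-norm contraction. This is the ``proper policy'' SSP setting of \cite{bertsekas1996neuro}. Under (PE3), meaning Robbins--Monro stepsizes and infinite visitation, the theorem of \cite{tsitsiklis1994asynchronous} yields almost-sure convergence to the unique fixed point $V_{\widehat F^\pi}^\pi$.

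\textbf{Step 3 (identification).} Suppose $\widehat F^\pi=F^\pi$. Restricted to $S_0^\pi$, $V^\pi$ satisfies \eqref{eq:BE_pi}. Since $V^\pi=0$ on $F^\pi$ and $V^\pi=1$ on $Z$, this restriction is exactly \eqref{eq:clamped_bellman_method}. By uniqueness from Step 1, $V^\pi=V_{F^\pi}^\pi$ on $S_0^\pi$.
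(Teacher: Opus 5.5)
Your proof is correct and follows the paper's overall structure: matrix form on $\widehat S_0^\pi$, transience $\Rightarrow \rho<1 \Rightarrow$ unique solution $(I-P)^{-1}b$, identification with $V^\pi$ by uniqueness, then stochastic approximation. Two ingredients are argued differently.

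\textbf{Transience.} The paper argues that every BSCC of $\pi$ disjoint from $Z$ lies in $F^\pi\subseteq\widehat F^\pi$, so all states of $\widehat S_0^\pi$ are transient. You instead use $V^\pi>0$ on $\widehat S_0^\pi$ to get a uniform positive exit probability within $|S|$ steps, hence $\|P^{|S|}\|_\infty<1$. Your version is quantitative and directly proves transience of the chain stopped at $Z\cup\widehat F^\pi$, which is what is needed. A state of $\widehat S_0^\pi$ can be recurrent in the unstopped chain, e.g.\ inside a BSCC that meets $Z$, so the paper's wording must be read as referring to the stopped chain.

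\textbf{TD convergence.} The paper verifies an ODE-type package: global asymptotic stability of $\dot u=b-(I-\bar T_{\pi,\widehat F^\pi})u$, boundedness via convex combinations (using initialization in $[0,1]$ and $\alpha_t\le 1$), and martingale-difference noise; it then cites Prop.~4.6 of Bertsekas--Tsitsiklis. You instead make $V\mapsto PV+b$ a weighted max-norm contraction, e.g.\ with $\xi=(I-P)^{-1}\mathbf 1$, and invoke Tsitsiklis' asynchronous theorem. What your route buys:
\begin{itemize}
\item boundedness of the iterates comes from the contraction itself;
\item there is no condition on the initialization or on the size of the stepsizes;
\item it is tailored to componentwise asynchronous updates that only require infinite visitation.
\end{itemize}
The paper's ODE framing is more portable to other settings, but it must check boundedness separately. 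For asynchronous updates it also implicitly relies on the drift remaining stable under diagonal rescaling. That holds here because $I-\bar T_{\pi,\widehat F^\pi}$ is a nonsingular M-matrix, and your contraction argument delivers it automatically.
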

\begin{proof}
All matrix-vector identities below are taken on \(\widehat S_0^\pi\), with \(V_{\widehat F^\pi}^\pi\) read as its restriction to \(\widehat S_0^\pi\).
Express \eqref{eq:clamped_bellman_method} on \(\widehat S_0^\pi\) in matrix form as
\begin{align}
u = \bar T_{\pi,\widehat F^\pi}u + b_{\pi,\widehat F^\pi},
\label{eq:clamped_matrix_conv}
\end{align}
where \(u\) is the unknown value vector on \(\widehat S_0^\pi\), for \(s,s'\in \widehat S_0^\pi\),
\begin{align}
\bar T_{\pi,\widehat F^\pi}(s,s')&:=T(s'\mid s,\pi(s)), \notag \\
b_{\pi,\widehat F^\pi}(s)&:=\sum\nolimits_{z\in Z} T(z\mid s,\pi(s)). \notag
\end{align}
Since \(F^\pi\subseteq \widehat F^\pi\), every BSCC induced by \(\pi\) that is disjoint from \(Z\) is contained in \(\widehat F^\pi\).
Hence every state in \(\widehat S_0^\pi\) is transient, and therefore \(\bar T_{\pi,\widehat F^\pi}^n \to 0\) as \(n\to\infty\).
Thus the spectral radius \(\rho(\bar T_{\pi,\widehat F^\pi})<1\).
It follows that \(I-\bar T_{\pi,\widehat F^\pi}\) is invertible, so \eqref{eq:clamped_matrix_conv} has the unique solution
\begin{align}
V_{\widehat F^\pi}^\pi
=
(I-\bar T_{\pi,\widehat F^\pi})^{-1} b_{\pi,\widehat F^\pi}.
\end{align}
If \(\widehat F^\pi=F^\pi\), then the clamp set is exactly the set of states with zero reachability probability under \(\pi\).
Hence, by uniqueness of the clamped Bellman equation, for \(s\in S_0^\pi\),
\begin{align}
V_{F^\pi}^\pi(s)=V^\pi(s).
\end{align}
We now analyze the sample recursion \eqref{eq:td_undisc_eval_method}.
This recursion performs tabular TD updates on \(\widehat S_0^\pi\), with boundary values fixed at \(1\) on \(Z\) and \(0\) on \(\widehat F^\pi\).
Hence it is an asynchronous fixed policy tabular TD scheme for the linear system \eqref{eq:clamped_matrix_conv}.
To invoke stochastic-approximation results, it remains to verify the associated ordinary differential equation, the stability of its equilibrium, the boundedness of the iterates, and the regularity of the noise sequence.
First, the one-step TD error can be decomposed into the averaged drift \(b_{\pi,\widehat F^\pi}-(I-\bar T_{\pi,\widehat F^\pi})u\) plus a martingale\mbox{-}difference noise term.
Accordingly, the associated ordinary differential equation is
\begin{align}
\dot u
=
b_{\pi,\widehat F^\pi}-(I-\bar T_{\pi,\widehat F^\pi})u.
\label{eq:ode_conv}
\end{align}
Since \(\rho(\bar T_{\pi,\widehat F^\pi})<1\), every eigenvalue of \(\bar T_{\pi,\widehat F^\pi}-I\) has strictly negative real part.
Hence the equilibrium \(V_{\widehat F^\pi}^\pi\) is globally asymptotically stable.
Second, if initialized in \([0,1]\) with \(0<\alpha_t\le 1\), the iterates remain bounded, since each TD update is a convex combination of the current value and a target in \([0,1]\).
Third, in the finite-state setting, the one-step TD noise is a martingale-difference sequence with bounded second moment.
By \textnormal{(PE3)}, the stepsize \(\alpha_t\) satisfies the Robbins--Monro conditions \(\sum_t \alpha_t=\infty\) and \(\sum_t \alpha_t^2<\infty\), and every state in \(\widehat S_0^\pi\) is visited infinitely often.
Therefore, Prop.~4.6 of~\cite{bertsekas1996neuro} implies that the clamped TD iterates converge almost surely to \(V_{\widehat F^\pi}^\pi\).
\end{proof}

\vspace{-10pt}
\subsection{Convergence of Policy Iteration}
\label{subsec:sample_improve}

Now we present the convergence of the whole policy iteration. Since the classical policy improvement theorem does not apply in the undiscounted reachability setting, we prove monotonicity for a single policy improvement step, and then combine it with the policy evaluation.

\begin{lemma}
\label{lem:monotone_improvement_exact}
Given \(\pi\), let \(\pi^+\) be given by the policy improvement step \eqref{eq:strict_tie_stay_method}.
Then for all \(s\in S\), \(V^{\pi^+}(s)\ge V^\pi(s)\).
If \(\pi^+\neq\pi\), then the inequality is strict at some state.
\end{lemma}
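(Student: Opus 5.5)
The plan is a sub-solution (submartingale) argument for the least fixed point, in place of the classical contraction-based policy improvement theorem. Throughout we read the improvement step \eqref{eq:strict_tie_stay_method} with exact values, i.e.\ \(\widehat Q_k(s,a)=Q^\pi(s,a)=\sum_{s'}T(s'|s,a)V^\pi(s')\). Write \(V:=V^\pi\).

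\emph{Step 1 (sub-solution property).} For every \(s\notin Z\) we claim
\begin{align}
V(s)\le \sum\nolimits_{s'\in S}T^{\pi^+}(s'|s)\,V(s'). \notag
\end{align}
If \(\pi^+(s)=\pi(s)\), this holds with equality by \eqref{eq:BE_pi}. Otherwise the update rule gives \(Q^\pi(s,\pi^+(s))>Q^\pi(s,\pi(s))+\delta\ge V(s)\), so the inequality is strict. Hence \(V\) is a sub-solution of the Bellman equation \eqref{eq:BE_pi} for \(\pi^+\).

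\emph{Step 2 (the main obstacle).} A sub-solution of a non-contractive equation need not lie below the least fixed point. Example~\ref{ex:minimal_drift} shows this: positive constants on a zero-reachability BSCC are fixed points. So we must show that \(V=0\) on every BSCC \(C\) of the chain \(T^{\pi^+}\) with \(C\cap Z=\emptyset\).

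First, \(C\) is closed under \(\pi^+\) and \(V\) is subharmonic on \(C\). Take \(s^\ast\in\arg\max_C V\). The inequality of Step 1 at \(s^\ast\) forces \(V=V(s^\ast)\) on all successors of \(s^\ast\). Propagating by irreducibility, \(V\equiv c\) on \(C\).

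Next, suppose some \(s\in C\) had \(\pi^+(s)\neq\pi(s)\). Then the strict inequality of Step 1 would give \(c=V(s)<\sum_{s'\in C}T^{\pi^+}(s'|s)\,c=c\), a contradiction. Hence \(\pi^+=\pi\) on \(C\). Therefore \(C\) is also closed and irreducible under \(\pi\), so it is a BSCC of \(T^\pi\) disjoint from \(Z\). This gives \(V=V^\pi=0\) on \(C\), i.e.\ \(c=0\).

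\emph{Step 3 (comparison).} Let \((s_n)\) be the \(\pi^+\)-chain stopped at \(\tau_Z\). By Step 1 and the fact that \(V=1\) on \(Z\), the process \(V(s_{n\wedge\tau_Z})\) is a bounded submartingale. Hence
\begin{align}
V(s)\le \mathbb E_s^{\pi^+}[V(s_{n\wedge\tau_Z})]. \notag
\end{align}
As \(n\to\infty\), the finite chain almost surely either hits \(Z\), where \(V=1\), or is absorbed in a BSCC disjoint from \(Z\), where \(V=0\) by Step 2. Bounded convergence then yields
\begin{align}
V(s)\le \mathbb P_s^{\pi^+}(\tau_Z<\infty)=V^{\pi^+}(s). \notag
\end{align}
On \(Z\) both values equal \(1\).

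\emph{Step 4 (strictness).} If \(\pi^+\neq\pi\), pick \(s\) with \(\pi^+(s)\neq\pi(s)\), so \(s\notin Z\). Using \eqref{eq:BE_pi} for \(\pi^+\) and Step 3, we get
\begin{align}
V^{\pi^+}(s)=\sum\nolimits_{s'}T^{\pi^+}(s'|s)V^{\pi^+}(s')\ge Q^\pi(s,\pi^+(s))>V^\pi(s). \notag
\end{align}
This gives strict improvement at that state.
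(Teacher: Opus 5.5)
Your proof is correct. Your Step 2 is the same as the paper's first step: the max-over-$C$ argument showing that $\pi^+=\pi$ on every BSCC $C$ of $\pi^+$ disjoint from $Z$, so $V^\pi=0$ on $C$. Your comparison and strictness steps, however, take a different route from the paper.

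For the comparison, the paper clamps $F^\pi$ and argues that $S_0^\pi$ is transient under $\pi^+$. It then uses entrywise nonnegativity of $(I-\bar T_{\pi^+,F^\pi})^{-1}$ to get $V^\pi\le V_{F^\pi}^{\pi^+}$ on $S_0^\pi$, and finishes with $V_{F^\pi}^{\pi^+}=\mathbb P_s^{\pi^+}(\tau_Z<\tau_{F^\pi})\le V^{\pi^+}$. You instead show that $V^\pi(s_{n\wedge\tau_Z})$ is a bounded submartingale under $\pi^+$ that converges a.s.\ to $\mathbf 1\{\tau_Z<\infty\}$. This avoids three things the paper needs: the auxiliary clamped system for $\pi^+$, the transience and spectral-radius check, and the probabilistic identification of $V_{F^\pi}^{\pi^+}$, which the paper takes as evident from the construction.

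For strictness, the paper splits into two cases. A changed state in $S_0^\pi$ is handled via $(I-\bar T_{\pi^+,F^\pi})^{-1}\ge I$, and a changed state in $F^\pi$ via the one-step bound. You observe that the one-step bound $V^{\pi^+}(s)\ge Q^\pi(s,\pi^+(s))>V^\pi(s)$ works at any changed state once global monotonicity is known. So the case split is unnecessary.

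In exchange, the paper's route stays linear-algebraic and reuses the clamped objects $\bar T_{\pi,F}$, $b_{\pi,F}$, $V_F^\pi$ from Lemma~\ref{lem:clamped_td_conv}, keeping evaluation and improvement in one framework. It also produces the explicit intermediate value $V_{F^\pi}^{\pi^+}$ and the quantitative bound $V_{F^\pi}^{\pi^+}-V^\pi=(I-\bar T_{\pi^+,F^\pi})^{-1}r\ge r$ on $S_0^\pi$, stated in terms of the one-step advantages $r$.
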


\begin{proof}
We proceed in three steps: set invariance, value function comparison, and strict improvement.

First, we show that the update from \(\pi\) to \(\pi^+\) does not create any new BSCC disjoint from \(Z\) outside \(F^\pi\).
By construction, for every \(s\notin Z\),
\begin{align}
Q^\pi(s,\pi^+(s))\ge V^\pi(s),
\label{eq:pi_plus_ge_vpi}
\end{align}
with strict inequality whenever \(\pi^+(s)\neq\pi(s)\).
Let \(C\) be any BSCC induced by \(\pi^+\) such that \(C\cap Z=\emptyset\), and choose \(s^\star\in C\) maximizing \(V^\pi\) over \(C\).
Since \(C\) is closed, 
\begin{align}
Q^\pi(s^\star,\pi^+(s^\star))
&=
\sum\nolimits_{s'}T(s'|s^\star,\pi^+(s^\star))V^\pi(s') \notag\\
&\le V^\pi(s^\star).
\label{eq:max_state_bound_compact}
\end{align}
Together with \eqref{eq:pi_plus_ge_vpi}, this implies equality in \eqref{eq:max_state_bound_compact}.
Hence every successor of \(s^\star\) under \(\pi^+\) has the same \(V^\pi\)-value as \(s^\star\), and also \(\pi^+(s^\star)=\pi(s^\star)\).
Propagating this argument along paths in \(C\), we conclude that \(V^\pi\) is constant on \(C\) and \(\pi^+(s)=\pi(s)\) for all \(s\in C\).
Thus \(C\) is also a BSCC under \(\pi\). Since \(C\cap Z=\emptyset\), no state in \(C\) can reach \(Z\) under \(\pi\), and hence \(C\subseteq F^\pi\).
Therefore, every BSCC induced by \(\pi^+\) that is disjoint from \(Z\) is contained in \(F^\pi\).

Second, we compare values after clamping \(F^\pi\).
All matrix-vector identities in this step are taken on \(S_0^\pi\), with \(V^\pi\) and \(V_{F^\pi}^{\pi^+}\) read as their restrictions to \(S_0^\pi\).
Since every BSCC of \(\pi^+\) disjoint from \(Z\) lies in \(F^\pi\), the subchain on \(S_0^\pi\) is transient under \(\pi^+\).
Therefore the clamped Bellman equation under \(\pi^+\) has a unique solution \(V_{F^\pi}^{\pi^+}\), satisfying
\begin{align}
V_{F^\pi}^{\pi^+}
=
\bar T_{\pi^+,F^\pi}V_{F^\pi}^{\pi^+}+b_{\pi^+,F^\pi}.
\label{eq:clamped_bellman_pi_plus_compact}
\end{align}
Equivalently, \((I-\bar T_{\pi^+,F^\pi})V_{F^\pi}^{\pi^+}=b_{\pi^+,F^\pi}\).
Moreover, by the boundary conditions, for every \(s\in S_0^\pi\),
\begin{align}
\bigl(\bar T_{\pi^+,F^\pi}V^\pi+b_{\pi^+,F^\pi}\bigr)(s)
=
Q^\pi(s,\pi^+(s))
\ge V^\pi(s).
\label{eq:operator_on_vpi}
\end{align}
Therefore,
\begin{align}
&(I-\bar T_{\pi^+,F^\pi})(V_{F^\pi}^{\pi^+}-V^\pi) \notag\\
&=(I-\bar T_{\pi^+,F^\pi})V_{F^\pi}^{\pi^+}-(I-\bar T_{\pi^+,F^\pi})V^\pi \notag\\
&=b_{\pi^+,F^\pi}-\bigl(V^\pi-\bar T_{\pi^+,F^\pi}V^\pi\bigr) \notag\\
&=\bar T_{\pi^+,F^\pi}V^\pi+b_{\pi^+,F^\pi}-V^\pi
\ge 0.
\label{eq:linear_compare_compact}
\end{align}
Because \(\rho(\bar T_{\pi^+,F^\pi})<1\), the inverse \((I-\bar T_{\pi^+,F^\pi})^{-1}\) exists and is entrywise nonnegative, so \(V_{F^\pi}^{\pi^+}\ge V^\pi\) on \(S_0^\pi\).
Also, by construction of the clamped value function,
\begin{align}
V_{F^\pi}^{\pi^+}(s)
=
\mathbb P_s^{\pi^+}(\tau_Z<\tau_{F^\pi}) \notag
\le
\mathbb P_s^{\pi^+}(\tau_Z<\infty)
=
V^{\pi^+}(s),
\label{eq:clamped_below_true_value}
\end{align}
for every \(s\in S_0^\pi\).
Combining this with \(V_{F^\pi}^{\pi^+}\ge V^\pi\), we obtain \(V^{\pi^+}\ge V^\pi\) on \(S_0^\pi\). The same inequality is trivial on \(F^\pi\) and \(Z\).
Therefore \(V^{\pi^+}\ge V^\pi\) on all of \(S\).

Third, we prove strict improvement when \(\pi^+\neq\pi\).
If some changed state lies in \(S_0^\pi\), then by \eqref{eq:pi_plus_ge_vpi}, the vector
\[
r:=\bar T_{\pi^+,F^\pi}V^\pi+b_{\pi^+,F^\pi}-V^\pi
\]
is nonnegative and strictly positive in at least one component.
By \eqref{eq:linear_compare_compact}, we have 
$
V_{F^\pi}^{\pi^+}-V^\pi=(I-\bar T_{\pi^+,F^\pi})^{-1}r
$.
Since \(\rho(\bar T_{\pi^+,F^\pi})<1\), componentwise we have
\[
(I-\bar T_{\pi^+,F^\pi})^{-1}
=
\sum\nolimits_{n=0}^{\infty}\bar T_{\pi^+,F^\pi}^n
\ge I.
\]
Hence, if \(r_j>0\) for some component \(j\), then
\[
\bigl(V_{F^\pi}^{\pi^+}-V^\pi\bigr)_j
=
\bigl((I-\bar T_{\pi^+,F^\pi})^{-1}r\bigr)_j
\ge r_j>0.
\]
Therefore \(V_{F^\pi}^{\pi^+}>V^\pi\) at some state, and thus \(V^{\pi^+}>V^\pi\) at some state as well.

Otherwise, every changed state lies in \(F^\pi\).
Then for some \(s\in F^\pi\), we have \(Q^\pi(s,\pi^+(s))>V^\pi(s)=0\).
Using the already established monotonicity \(V^{\pi^+}\ge V^\pi\), we obtain
\begin{align}
V^{\pi^+}(s)
&=
\sum\nolimits_{s'}T(s'|s,\pi^+(s))V^{\pi^+}(s') \notag\\
&\ge
\sum\nolimits_{s'}T(s'|s,\pi^+(s))V^\pi(s') \notag\\
&=
Q^\pi(s,\pi^+(s))
>
0.
\label{eq:strict_improvement_on_F}
\end{align}
Thus \(V^{\pi^+}>V^\pi\) at some state in either case.
\end{proof}

Finally, we combine the policy evaluation with the monotonic policy improvement to show that, when the action values used for policy improvement are evaluated exactly, Algorithm~\ref{alg:pdF_pi} terminates after finitely many policy improvements at an optimal policy with probability one.

\begin{theorem}
\label{thm:sample_based_pi_conv}
Under conditions \(\textup{(PE1)}\)--\(\textup{(PE3)}\), if policy improvement is performed with \(\delta=0\) and
\(\widehat Q_k(s,a)=Q^{\pi_k}(s,a)\) for every iteration \(k\) and all relevant \(s,a\),
then Algorithm~\ref{alg:pdF_pi} terminates after finitely many policy improvements at an optimal policy with probability \(1\).
\end{theorem}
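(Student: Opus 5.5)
The proof combines three earlier results: Lemma~\ref{lem:partition_stabilization} and Lemma~\ref{lem:clamped_td_conv} for the evaluation step, and Lemma~\ref{lem:monotone_improvement_exact} for the improvement step. It closes with finiteness of the deterministic stationary policy set and an optimality argument at the fixed point.

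\emph{Step 1 (evaluation).} Fix an iteration $k$ with policy $\pi_k$. Under (PE1)--(PE2), Lemma~\ref{lem:partition_stabilization} gives $\widehat F^{\pi_k}=F^{\pi_k}$ eventually almost surely. Lemma~\ref{lem:clamped_td_conv} under (PE3) then gives that the clamped recursion converges almost surely to $V^{\pi_k}_{F^{\pi_k}}$, which equals $V^{\pi_k}$ on $S_0^{\pi_k}$. It is trivially equal to $V^{\pi_k}$ on $Z\cup F^{\pi_k}$. Since there are at most $|A|^{|S|}$ distinct policies and the iteration count will be bounded, a countable intersection of probability-one events shows these conclusions hold simultaneously for all iterations with probability one. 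In any case the hypothesis $\widehat Q_k=Q^{\pi_k}$ places the improvement step in the exact setting.

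\emph{Step 2 (finite termination).} With $\delta=0$ and exact $Q^{\pi_k}$, update \eqref{eq:strict_tie_stay_method} is exactly the improvement step of Lemma~\ref{lem:monotone_improvement_exact}. Hence $V^{\pi_{k+1}}\ge V^{\pi_k}$ pointwise, with strict inequality at some state whenever $\pi_{k+1}\ne\pi_k$. So the sequence $V^{\pi_k}$ is strictly increasing in the partial order until termination, and no policy can be revisited: a revisit would force $V^{\pi_j}=V^{\pi_k}$ for some $j<k$, contradicting strict increase. The number of deterministic stationary policies is finite, so the algorithm terminates after finitely many improvements.

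\emph{Step 3 (optimality at termination).} This is the main obstacle. At termination $\pi:=\pi_k=\pi_{k+1}$, so $Q^\pi(s,a)\le V^\pi(s)$ for all $s\notin Z$ and all $a$. Thus $V^\pi$ is a fixed point of \eqref{eq:BOE}. However, $V^\star$ is the \emph{least} fixed point, so this gives only $V^\pi\ge V^\star$ in the wrong direction, and I need a separate argument for $V^\pi\ge V^{\sigma}$ for every deterministic stationary $\sigma$.

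I plan to reuse the comparison idea of Lemma~\ref{lem:monotone_improvement_exact}. Fix $\sigma$ and let $u:=V^\pi$. For $s\notin Z$, $(T^\sigma u)(s)=Q^\pi(s,\sigma(s))\le u(s)$, so $u$ is superharmonic for $T^\sigma$ with $u=1$ on $Z$ and $u\ge 0$. Iterating gives
\[
u(s)\ge \mathbb E^\sigma_s\bigl[u(s_{n\wedge\tau_Z})\bigr]\ge \mathbb P^\sigma_s(\tau_Z\le n),
\]
because $u(s_{n\wedge \tau_Z})$ is a supermartingale under $\mathbb P^\sigma_s$, $u\ge0$, and $u=1$ on $Z$. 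Letting $n\to\infty$ yields $V^\pi\ge V^\sigma$. Since deterministic stationary policies suffice for maximal reachability on finite MDPs, as recalled in Section~\ref{sec:prelim}, this gives $V^\pi=V^\star$, and therefore $\pi$ maximizes $\mathbb E_{s_0\sim\beta}[V^\pi(s_0)]$.

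This supermartingale step sidesteps the nonuniqueness of fixed points, since only superharmonicity together with the boundary values is used. I expect it to carry the argument, and the remaining care is only the measure-theoretic bookkeeping of Step 1.
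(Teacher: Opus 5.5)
Your proof is correct, and Steps 1--2 follow the paper's argument. The paper likewise establishes exact evaluation almost surely and then uses Lemma~\ref{lem:monotone_improvement_exact} to get strict monotone improvement, no repeated policy, and finite termination. One small difference in Step 1: the paper intersects the almost-sure events over the finite set $\Pi_{\mathrm{DS}}$ rather than over iterations. That avoids your forward reference to the iteration bound, though as you note, the exact-$\widehat Q_k$ hypothesis makes the point moot.

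The real departure is Step 3, and the reason you give for it is mistaken. Because $V^\star$ is the least fixed point of \eqref{eq:BOE} in $[0,1]^S$, it gives $V^\star\le V^\pi$ for the terminal fixed point $V^\pi$. That is exactly the direction you need, since $V^\pi\le V^\star$ is immediate from $V^\star=\max_\sigma V^\sigma$. The paper closes this way, with $V^\star\le V^{\bar\pi}\le V^\star$.

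Your supermartingale argument therefore proves the same inequality $V^\pi\ge V^\sigma$ a second time; it is not a workaround for a wrong-direction problem. It is valid. In substance it proves the relevant half of the least-fixed-point characterization: any nonnegative function equal to $1$ on $Z$ and superharmonic under every action dominates every reachability value. What your route buys is self-containment. It does not rely on the cited fixed-point fact. It also applies verbatim to history-dependent randomized $\sigma$, because the one-step conditional expectation is then a convex combination of values $Q^\pi(s_t,a)\le V^\pi(s_t)$, so you do not even need the sufficiency of deterministic stationary policies.
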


\begin{proof}
Let \(\Pi_{\mathrm{DS}}\) denote the finite set of deterministic stationary policies.
For each \(\pi\in\Pi_{\mathrm{DS}}\), under conditions \(\textup{(PE1)}\)--\(\textup{(PE3)}\), the policy evaluation stage returns the exact value \(V^\pi\) almost surely when run to convergence.
Since \(\Pi_{\mathrm{DS}}\) is finite, the intersection of these almost-sure events over all \(\pi\in\Pi_{\mathrm{DS}}\) still has probability one.
Fix a sample path in this event.
On this sample path, because \(\delta=0\) and the one-step values are evaluated exactly, each policy improvement step coincides with the exact update studied in Lemma~\ref{lem:monotone_improvement_exact}.

Whenever \(\pi_{k+1}\neq\pi_k\), Lemma~\ref{lem:monotone_improvement_exact} yields $V^{\pi_{k+1}}(s)\ge V^{\pi_k}(s)$ for all $s\in S$ 
with strict inequality at some state.
Hence no policy can repeat along this sample path.
Since \(\Pi_{\mathrm{DS}}\) is finite, only finitely many policy improvements that change the policy are possible.
Therefore, the policy iteration procedure terminates after finitely many policy improvements at some policy \(\bar\pi\).

For this terminal policy \(\bar\pi\), no strict-improvement action exists.
Hence, for every \(s\notin Z\),
\[
V^{\bar\pi}(s)
=
Q^{\bar\pi}(s,\bar\pi(s))
=
\max\nolimits_{a\in A(s)} Q^{\bar\pi}(s,a).
\]
Together with the boundary condition \(V^{\bar\pi}(z)=1\) for \(z\in Z\), this shows that \(V^{\bar\pi}\) satisfies the Bellman optimality equation \eqref{eq:BOE}.
Since \(V^\star\) is the least fixed-point of \eqref{eq:BOE} and every policy value satisfies \(V^{\bar\pi}\le V^\star\), we obtain \(V^\star\le V^{\bar\pi}\le V^\star\).
Therefore \(V^{\bar\pi}=V^\star\), and \(\bar\pi\) is optimal.
\end{proof}

\begin{remark}
\label{rem:Q_concentration}
The exact-$\widehat Q$ assumption in Theorem~\ref{thm:sample_based_pi_conv} is a simplification keeping the statement focused on the resolution of nonuniqueness. For finite $N_Q$, $\widehat Q_k$ in \eqref{eq:Qhat_method} concentrates around $Q^{\pi_k}$ at a Hoeffding rate. With a strict-improvement threshold $\delta>0$ and the Hoeffding error of $\widehat Q_k$ kept below $\delta/2$, Algorithm~\ref{alg:pdF_pi} terminates with high probability at a policy $\bar\pi$ such that $\max_{a} Q^{\bar\pi}(s,a) - V^{\bar\pi}(s) \le 2\delta$.
\end{remark}
\section{Case Study}\label{sec:case_study}
We validate the proposed method on a deliberately misleading stochastic grid world.
The \(5\times 4\) grid world has initial state \((0,1)\), intermediate goal \(A=(0,3)\), final goal \(B=(3,1)\), and absorbing trap states \(T\) at all cells \((x,y)\) with \(x\in\{1,2,3\}\) and \(y\in\{0,2,4\}\). The task is specified by the sc\mbox{-}LTL formula
\(\varphi=\Diamond(A\wedge\Diamond B)\) 
which requires the path to first reach \(A\) and then eventually reach \(B\). The traps block all vertical movement between rows \(y=1\) and \(y=3\) except through column \(x=0\). Intuitively, the agent must move upward along the left corridor to \(A\), then return and cross to the right corridor to reach \(B\). At each nonterminal state, the commanded action succeeds with probability \(0.8\), and the remaining probability is split equally between the two adjacent directions. Translating \(\varphi\) into a DFA yields a three-state automaton. The resulting product MDP has \(60\) states, where \(s_0\) denotes its initial state.

This instance is designed to contain two difficulties for sample-based planning. First, the discounted surrogate is deliberately misleading on this example. For \(\gamma=0.95\), a suboptimal policy \(\pi'\) has \(V_\gamma^{\pi'}(s_0)\approx 0.404\) but \(V^{\pi'}(s_0)=0.675\), whereas the optimal policy \(\pi^\star\) has \(V^{\pi^\star}(s_0)\approx 0.780\) and a much smaller \(V_\gamma^{\pi^\star}(s_0)\approx 0.017\). Thus, a policy that appears favorable under the discounted surrogate can be strongly suboptimal for the original sc\mbox{-}LTL objective. Second, the absorbing trap states induce BSCCs disjoint from \(Z\), which cause nonuniqueness in the undiscounted Bellman equation. This case study therefore tests whether the proposed method can still recover the correct optimal policy for the sc\mbox{-}LTL objective despite both obstacles.

The experiments were conducted using Python on a Windows 11 machine equipped with an Intel i9-14900K processor.
We first solve the planning problem exactly by value iteration to obtain the ground truth, and then run the proposed policy iteration using samples only. We use \(\gamma=0.95\), \(\varepsilon=1e{-}3\), \(8e3\) discounted TD updates, \(3e4\) clamped TD updates, \(1e4\) samples in each policy improvement step, constant stepsize \(\alpha=0.05\), and threshold \(1e{-}4\).

\begin{figure}[t]
    \centering
    \includegraphics[width=\linewidth]{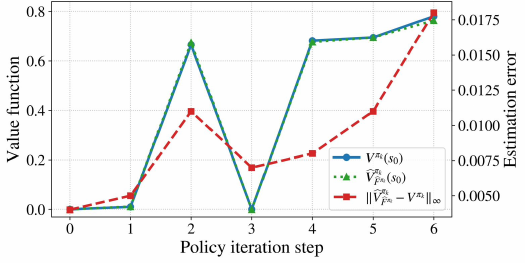}
    \vspace{-20pt}
    \caption{Evolution of the exact value, the sample-based estimate, and the estimation error over policy iteration steps in the grid world. The estimate remains accurate, and the policy iteration reaches the optimal value.}
    \label{fig:case_outerloop}
    \vspace{-15pt}
\end{figure}

These numerical results are consistent with the theory in Sec.~\ref{sec:convergence}. As shown in Fig.~\ref{fig:case_outerloop}, the algorithm terminates after \(7\) policy improvements at a final policy with \(V^{\pi_K}(s_0)=V^\star(s_0)=0.780\), indicating empirical recovery of the optimal policy and consistency with Theorem~\ref{thm:sample_based_pi_conv}. Moreover, the sample-based clamped estimate \(\widehat V_{\widehat F^{\pi_k}}^{\pi_k}(s_0)\) closely tracks the exact value \(V^{\pi_k}(s_0)\) throughout the iterations. The clamped policy evaluation is accurate, with \(\max_k \|\widehat V_{\widehat F^{\pi_k}}^{\pi_k}-V^{\pi_k}\|_\infty = 1.8\times 10^{-2}\), supporting Lemma~\ref{lem:clamped_td_conv}. In addition, the final estimated clamp set satisfies \(\widehat F^{\pi_K}=F^{\pi_K}\) with \(|\widehat F^{\pi_K}|=18\), consistent with Lemma~\ref{lem:partition_stabilization}. Finally, although the value sequence is not strictly monotone at every policy improvement step because policy evaluation is sample-based, most updates are still improving, in qualitative agreement with Lemma~\ref{lem:monotone_improvement_exact}.
\section{Conclusion} 
This work studied model-free policy iteration for sc\mbox{-}LTL planning via maximal reachability. The method resolves the nonuniqueness of the undiscounted Bellman equation by identifying relevant BSCCs through a discounted surrogate, followed by well-posed undiscounted evaluation and greedy improvement. We proved almost-sure convergence of policy evaluation and, under exact one-step improvement, finite termination at an optimal policy.

\bibliographystyle{IEEEtran}

\end{document}